\definecolor{Green}{rgb}{0.20,0.43,0.09}
\newtheorem{Thm}{Theorem}
\newtheorem{Lem}[Thm]{Lemma}
\newtheorem{Prop}[Thm]{Proposition}
\newtheorem{Cor}[Thm]{Corollary}
\theoremstyle{definition}
\theoremstyle{remark}
\newtheorem{Rem}[Thm]{Remark}
\numberwithin{equation}{section}
\def\QQ{{\mathbb{Q}}}
\def\RR{{\mathbb{R}}}
\def\ZZ{{\mathbb{Z}}}
\def\NN{{\mathbb{N}}}
\def\KK{{\mathbb{K}}}
\def\FF{{\mathbb{F}}}
\def\LL{{\mathbb{L}}}
\title[Cake cutting: Explicit examples for impossibility results]{Cake cutting:\\ Explicit examples for impossibility results}
\date{\today}
\author[G.~Ch\`eze]{Guillaume Ch\`eze}
\address{Guillaume Ch\`eze: Institut de Math\'ematiques de Toulouse, UMR 5219\\
Université de Toulouse ; CNRS  \\
UPS IMT, F-31062 Toulouse Cedex 9, France 
}
\email{guillaume.cheze@math.univ-toulouse.fr}
\date{\today}
\begin{document}
	
\begin{abstract}
In this article we suggest a model of computation for the cake cutting problem. 
In this model the mediator can ask the same queries as in the Robertson-Webb model but he or she can only perform algebraic operations as in the Blum-Shub-Smale model. All existing algorithms described in the Robertson-Webb model can be described in this new model.\\
We show that in this model there exist explicit couples of measures for which no algorithm  outputs an equitable fair division with connected parts.\\
We also show that there exist explicit set of  measures for which no algorithm in this model outputs a fair division which maximizes the utilitarian social welfare function.\\
The main tool of our approach is Galois theory.
\end{abstract}

\maketitle

\section*{Introduction}
In 1837, Pierre Wantzel has shown that there exists no general construction using only compass and straightedge which divides an angle into three equal angles. The proof relies on algebra and field theory. The angle trisection problem can be seen as a fair division problem: we have a portion of pizza and we want to divide it  in a fair way between three friends (by using only compass and straightedge constructions\ldots). Wantzel's theorem says that this problem has no solution.\\

In this article, we are going to study a similar fair division problem and we are going to use similar tools.\\

In the following, we consider an heterogeneous good, for example: a cake, land, time or computer memory, represented by the interval $X=[0,1]$ and $n$ players with different points of view. We associate to each player a non-atomic probability measure $\mu_i$ on the interval $X=[0;1]$. These measures represent the utility functions of the player. This means that if $[a,b] \subset X$ is a part of the cake then $\mu_i([a,b])$ is the value associated by the $i$-th player to this part of the cake. As $\mu_i$ are probability measures, we have $\mu_i(X)=1$ for all  $i$.\\
A division of $X$ is a partition $X=\sqcup_i X_i$ where $X_i$ is the part given to the $i$-th player. A division is \emph{simple} when each $X_i$ is an interval.\\

 Several notions of fair division exists.\\
We say that a division is \emph{proportional} when $\mu_i(X_i) \geq 1/n$.\\
We say that a division is \emph{envy-free} when for $i\neq j$, we have $\mu_i(X_i) \geq \mu_i(X_j)$.\\
We say that a division is \emph{equitable} when for all $i \neq j$, we have $\mu_i(X_i)=\mu_j(X_j)$.\\

We say  that the  division  $X=\sqcup_i X_i$ \emph{maximizes the utilitarian social welfare function} when
$$\sum_{i=1}^n \mu_i(X_i) \geq \sum_{i=1}^n \mu_i(Y_i),$$
 for all partition $X=\sqcup_i Y_i$.\\

 The problem of fair division (theoretical existence of fair division and construction of algorithms) has been studied in several papers \cite{Steinhaus,DubinsSpanier, EvenPaz, EdmondsPruhs, BramsTaylorarticle, RoberstonWebbarticle, Pikhurko, Thomson2006, Procacciasurvey, BJK, AzizMackenzie}, and books about this topic, see e.g. \cite{RobertsonWebb,BramsTaylor, Procacciachapter,Barbanel}. These results appear in the mathematics, economics, political science, artificial intelligence and computer science literature. Recently, the cake cutting problem has been studied intensively by computer scientists for solving resource allocation problems in multi agents systems, see e.g.~\cite{Chevaleyre06,Chen,Dynamic,Branzei}. \\

 A practical problem is the computation of fair divisions. In order to describe algorithms we thus need a model of computation. There exist two main classes of cake cutting algorithms: discrete and continuous protocols (also called moving knife methods). Here, we study only discrete algorithms. These kinds of algorithms can be  described thanks to the  classical model introduced by Robertson and Webb and formalized by Woeginger and Sgall in \cite{Woeg}. In this model we suppose that a mediator interacts with the agents. The mediator asks two type of queries: either cutting a piece with a given value, or evaluating a given piece. More precisely, the two type of queries allowed are:
\begin{enumerate}
\item $eval_i(x,y)$: Ask agent $i$ to evaluate the interval $[x,y]$. This means compute $\mu_i([x,y])$.
\item $cut_i(x,a)$: Asks agent $i$ to cut a piece of cake $[x,y]$ such that $\mu_i([x,y])=a$. This means: for given $x$ and $a$, solve $\mu_i([x,y])=a$.
\end{enumerate} 
In the Robertson-Webb model the mediator can adapt the queries from the previous answers given by the players. In this model, the complexity counts the finite number of queries necessary to get a fair division. For a rigorous description of this model we can consult: \cite{Woeg,Branzei2017}.\\

The result of a query is a real number and thus the mediator has to manipulate real numbers. There exist two possible models of computation which allows this task.\\
First, we can consider computable real numbers. Roughly speaking a real number is said to be computable if there exists a Turing machine which writes digit by digit the (infinite) decimal expansion of this number.  Unfortunately, this model of computation is not natural in our setting because we cannot decide in this model if a computable number is equal to 0. This means that we cannot decide if two computable numbers are equal. Thus, with this model, the mediator cannot check if a fair division is equitable.\\
Second, we can consider the BSS model of computation. This model has been  developed by Blum, Shub and Smale (BSS). It allows to study algorithms over a ring. Roughly speaking a BSS machine has registers which can hold arbitrary elements of the studied ring (here $\RR$), and perform exact arithmetic ($+,-,\times,\div$) and can branch on conditions based on exact comparisons ($=, <,>, \leq, \geq$). Furthermore, with this theory when the ring is $\ZZ/2\ZZ$ then we recover the classical theory of Turing machine. For a detailed description of this model see \cite{BSS,BCSS}. \\

In this article we are going to suppose that the mediator use a BSS machine. We call this new model of computation the BSSRW model (Blum-Shub-Smale-Robertson-Webb model)  and we are going to prove impossibilty results. \\

In the fair division literature some impossibility results have been already given.\\
 Stromquist in \cite{Stromquist} has proved that there exists no algorithm giving a simple and envy-free fair division for $n\geq 3$ players. When $n=2$, the classical ``Cut and Choose" algorithm gives a simple and envy-free fair division.\\
 Cechl\'arov\'a et al.  have shown, in \cite{Cech}, that there exists no algorithm computing a simple and equitable fair division for $n \geq 3$ players in the Roberston-Webb model.\\
 
  The strategy used in these articles is the following: they suppose that an algorithm computing the desired division exists and then by an iteration process they construct from this algorithm a set of measures giving a contradiction. Thus they obtain a result of this kind: for all algorithms in the Roberston-Webb model there exists a set of measures for which the desired fair division cannot be given.\\
It must be noticed that this approach gives for each algorithm a set of measures leading to a contradiction. Thus the set of measures is related to the algorithm. Moreover, the measures are not explicitly given. Therefore, we can imagine that these sets of measure correspond to  very complicated situations not appearing in practice and that for ``reasonable" sets of measures the contradiction does not occur.\\

 Procaccia and Wang have also given an impossibility result for equitable fair division in \cite{ProcWang}. As a corollary of a theorem about a lower bound for equitable division they deduce that there exists no  algorithm giving an equitable fair division. However, with this approach we still cannot give an explicit example of measures such that no algorithm in the Robertson-Webb model returns an equitable division with this input.\\
 
In the first part of this article,  we are going to study simple equitable fair divisions. This topic has been less studied than proportional and envy-free divisions. However, there exist some results showing the existence of such fair divisions \cite{Cechexistence,Segal-Halevi,Chezeequitable}. Furthermore, if we consider a continous protocol, it is possible to get an equitable fair division (not necessarily simple) thanks to Austin's moving knife procedure, see \cite{Austin}.\\
Here, we  are going to give explicit examples where two players cannot get an equitable fair division with connected parts if we use our suggested model of computation.\\


\begin{Thm}\label{thm:1}
In the BSSRW model of computation no algorithm returns a simple and equitable division when the measures $(\mu_1, \mu_2)$ are given by 
$$\mu_1\big([0,x]\big)=x,\quad \mu_2\big([0,x]\big)=x^5.$$
\end{Thm}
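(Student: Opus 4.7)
The plan is a Galois-theoretic degree obstruction. I first identify the unique admissible cut point and show it has degree $3$ over $\QQ$; then I show the BSSRW machine, on this particular input, can only produce reals whose degree over $\QQ$ is a power of $5$.

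\textbf{Step 1: the cut point.} A simple two-player division is specified by a single cut point $x\in[0,1]$, and in either orientation the equitability condition $\mu_1(X_1)=\mu_2(X_2)$ reduces to $x^5+x-1=0$. Over $\QQ$ this polynomial factors as $(x^2-x+1)(x^3+x^2-1)$; the first factor has no real roots and the cubic $x^3+x^2-1$ is irreducible (rational root test). Hence the only admissible cut point is the real root $\alpha\in(0,1)$ of $x^3+x^2-1$, with $[\QQ(\alpha):\QQ]=3$.

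\textbf{Step 2: the field generated by the queries.} Taking rational constants for the machine (without which the statement is vacuous, since a BSS machine over $\RR$ could hold $\alpha$ itself as a constant), I claim that on these measures every real produced during a finite run lies in some $\KK_k$ in a tower $\QQ=\KK_0\subset\KK_1\subset\cdots\subset\KK_k\subset\RR$ where each $\KK_{i+1}=\KK_i(\beta_i)$ with $\beta_i^5\in\KK_i$. Indeed the four query types specialize to
\[
eval_1(x,y)=y-x,\quad cut_1(x,a)=x+a,\quad eval_2(x,y)=y^5-x^5,\quad cut_2(x,a)=(x^5+a)^{1/5},
\]
so BSS arithmetic and every query except $cut_2$ preserve the current field, while $cut_2$ adjoins a real fifth root of an element already in the field.

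\textbf{Step 3: degree obstruction and conclusion.} The key algebraic fact is that for a subfield $F\subset\RR$ and $b\in F$, the polynomial $x^5-b$ is either irreducible over $F$ or has a root in $F$: any proper monic $F$-factor would have constant term of the form $\pm\zeta^j b^{k/5}$ with $1\le k\le 4$ and $\zeta$ a primitive fifth root of unity, reality forces $\zeta^j=1$, and $\gcd(k,5)=1$ then gives $b^{1/5}\in F$ by B\'ezout. Consequently each $[\KK_{i+1}:\KK_i]\in\{1,5\}$ and $[\KK_k:\QQ]=5^s$ for some $s\ge 0$. Since the algorithm's output would have to equal $\alpha\in\KK_k$, we would need $3=[\QQ(\alpha):\QQ]$ to divide $5^s$, which is absurd. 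The only real obstacle is Step 3's dichotomy for $x^5-b$ over real subfields; the rest is routine bookkeeping on what field extensions the admissible BSSRW operations can generate.
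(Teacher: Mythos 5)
Your proposal is correct and follows essentially the same route as the paper: reduce to the cut-point equation $t^5+t-1=0$, extract the irreducible cubic factor to get $[\QQ(t):\QQ]=3$, and show the query tower has degree a power of $5$ over $\QQ$, contradicting multiplicativity of degrees. The only cosmetic difference is that you prove the dichotomy for $x^5-b$ over real subfields directly (via the constant-term/B\'ezout argument), whereas the paper cites the standard irreducibility criterion for $T^p-b$ with $p$ prime.
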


The strategy used to prove this theorem is the following: We are going to show that if there exists an equitable and simple division $X=[0,t] \sqcup [t,1]$ then the final cutpoint $t$ must satisfy a polynomial equation. Then, with elementary field theory, we can show that $t$ cannot be computed with the BSSRW model.\\

Now, if we use Abel's impossibility theorem and Galois' theory showing that some polynomials are not solved by radicals, then we obtain other examples as stated in the next theorem:

\begin{Thm}\label{thm:2}
In the BSSRW model of computation there exist measures $(\mu_1,\mu_2)$ such that no algorithm returns a simple and equitable division for these measures.\\
Furthermore, we can take  $(\mu_1, \mu_2)$ in the following way: 
$$\mu_1\big([0,x]\big)=x,\quad \mu_2\big([0,x]\big)=x^d$$
where 
\begin{itemize}
\item  $d\geq 5$ is even,
\item or $d \geq 5$ is odd with $d \not \equiv 2 \, [3]$,
\item or  $d \geq 5$ is prime and $d  \equiv 2 \, [3]$.
\end{itemize}
\end{Thm}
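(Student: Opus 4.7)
The plan is to extend the strategy of Theorem~\ref{thm:1}. Exactly as in the $d=5$ case, any simple equitable division $[0,t]\sqcup[t,1]$ for these measures forces the cutpoint to satisfy $P_d(t):=t^d+t-1=0$, regardless of which player receives the left-hand piece, so it suffices to show that no real root of $P_d$ in $(0,1)$ lies in the field $F\subset\RR$ of BSSRW-reachable values for inputs $(\mu_1,\mu_2)$. For these specific measures, $\mathrm{cut}_1(x,a)$ returns $x+a$ (a field operation) while $\mathrm{cut}_2(x,a)$ returns $(x^d+a)^{1/d}$, so $F$ is the smallest subfield of $\RR$ containing $\QQ$ and closed under real $d$-th roots. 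Writing $F=\bigcup_i F_i$ with $F_0=\QQ$ and $F_{i+1}=F_i(\alpha)$ for some $\alpha$ with $\alpha^d\in F_i$, one has $[F_{i+1}:F_i]\mid d$, so $[F_i:\QQ]\mid d^i$, and every prime dividing $[\QQ(\gamma):\QQ]$ for $\gamma\in F$ must divide $d$.

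I would then split into two cases according to whether $t^2-t+1$ divides $P_d$. A direct evaluation at a primitive sixth root of unity $\omega$ (using $\omega^2-\omega+1=0$ and $\omega^6=1$) shows that $(t^2-t+1)\mid P_d(t)$ iff $d\equiv 5\pmod 6$. \emph{Case A: $d$ is prime with $d\equiv 2\pmod 3$.} Then $d\equiv 5\pmod 6$; set $R_d:=P_d/(t^2-t+1)$, of degree $d-2$, and invoke a Selmer-type irreducibility theorem for trinomials to conclude $R_d$ is irreducible over $\QQ$. Since $t^2-t+1$ has no real roots, the cutpoint $\beta$ is a root of $R_d$, so $[\QQ(\beta):\QQ]=d-2$. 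But $d$ is prime with $d\geq 5$, hence $\gcd(d-2,d)=1$ and $d-2\geq 3>1$; thus $d-2$ is not a power of $d$, contradicting $\beta\in F$. \emph{Case B: $d\geq 5$ is even, or $d\geq 5$ is odd with $d\not\equiv 2\pmod 3$.} Then $d\not\equiv 5\pmod 6$, so $P_d$ itself is irreducible over $\QQ$ of degree $d$ (again by a Selmer-type trinomial theorem). I would then show $\mathrm{Gal}(P_d/\QQ)\cong S_d$; since $d\geq 5$, $S_d$ is not solvable, so by Abel's impossibility theorem no root of $P_d$ lies in any radical extension of $\QQ$, and in particular $\beta\notin F$.

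The main obstacle is the Galois-group computation in Case~B. The irreducibility claims reduce to classical trinomial analysis of $t^n+t-1$ (Newton polygon or small-prime reduction arguments). To prove $\mathrm{Gal}(P_d/\QQ)\cong S_d$, the standard route combines: (i) transitivity from irreducibility; (ii) a discriminant computation showing the discriminant of $P_d$ is not a square in $\QQ$, ruling out $A_d$; and (iii) exhibiting a transposition via a prime $p$ modulo which $P_d$ factors as one quadratic times factors of odd degree, so that an appropriate power of Frobenius contributes a 2-cycle. The individual steps are routine, but must be verified uniformly across all admissible parities and residue classes of $d$ in the statement.
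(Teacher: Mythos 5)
Your overall architecture coincides with the paper's: the same reduction to $t^d+t-1=0$, the same observation that reducibility occurs exactly when $t^2-t+1$ divides the trinomial (equivalently $d\equiv 5\ [6]$, which matches the paper's case split), Selmer's theorem for the irreducibility statements, a degree/tower argument for prime $d\equiv 2\ [3]$, and the radical-tower-plus-unsolvable-Galois-group argument for the remaining cases. Case A is essentially identical to the paper's treatment (the paper allows $[\QQ(t):\QQ]=2$ as well and rules it out, while you discard the quadratic factor because it has no real roots; both work). One unused overreach: your blanket claim $[F_{i+1}:F_i]\mid d$ for \emph{composite} $d$ requires the Vahlen--Capelli irreducibility theorem for $T^d-a$ (it is true for real positive $d$-th roots in real fields, but not for arbitrary roots of $T^d-a$); fortunately you only invoke divisibility when $d$ is prime, where the paper's Lemma~\ref{lem:0} suffices.

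The genuine gap is in Case B, at the step you yourself identify as the main obstacle. Your proposed criterion --- transitivity, discriminant not a square, plus a transposition obtained from a quadratic-times-odd-degrees factorization modulo some prime --- does not imply $\mathrm{Gal}(P_d/\QQ)\cong S_d$: a transitive subgroup of $S_d$ containing a transposition need not be all of $S_d$ (imprimitive examples such as $D_4\subset S_4$ show this). Jordan's theorem requires \emph{primitivity} together with a transposition, or alternatively one needs an additional $p$-cycle for a prime $p>d/2$. Moreover, producing such auxiliary primes (or verifying the discriminant is a non-square) \emph{uniformly in $d$} is precisely the non-routine content here; it is not a finite check. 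The paper closes this gap by citing Osada's theorem (Proposition~\ref{prop:osada}), which for $T^d+aT+b$ with $\gcd\bigl(a_0c(d-1),db_0\bigr)=1$ --- here $\gcd(d-1,d)=1$ --- yields $\mathcal{S}_d$ directly from irreducibility. You should either invoke such a result or supply the primitivity argument; as written, Case B does not go through.
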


Thus, when we have two players, we can give \emph{easy and explicit} couples of measures for which no algorithm in the BSSRW model gives a simple and equitable fair division. \\

In a second part, we show that in the BSSRW model we cannot obtain  a fair division which maximizes the utilitarian social welfare function. In this last situation, we will consider $n$ players and we will not suppose the division $X=\sqcup_i X_i$ to be simple. \\
\begin{Thm}
In the BSSRW model of computation there exists measures $\mu_1, \mu_2, \ldots, \mu_n$ such that no algorithm returns a division which maximizes the utilitarian social welfare function.\\
Furthermore, we can take $\mu_1, \mu_2, \ldots, \mu_n$ in the following way:
$$\mu_1([0,x])=\cdots=\mu_{n-1}([0,x])=x,\quad  \mu_n([0,x])=x^{p}$$
where $p\geq 3$ is a prime number.
\end{Thm}

Now, in order to state our results, we introduce our model of computation.

\section{The BSSRW model}
In the Robertson-Webb model of computation the computational power of the mediator is not specified. It is not mentioned what kind of computations the mediator can perform with the results of the queries. Furthermore, the number of elementary operations done by the mediator (equality and inequality tests and arithmetic operations $+,-,\times,\div$) is not taken into account in the complexity. This point has been discussed in \cite{Chezecry}.\\
 Here, we suppose as in the classical model that the mediator can use the $cut_i$ and $eval_i$ queries. However, we also suppose that the mediator can  \emph{only} perform equality and inequality tests and the usual algebraic operations: $+,-, \times, \div$ on the results of queries. We also suppose that the mediator can use freely the rational numbers. This means that the mediator uses a Blum-Shub-Smale machine.\\

These assumptions are not restrictive. Indeed, no known algorithm uses the computation of a logarithm or of an exponential by the mediator or more generally the computation of a transcendental function.\\
 Furthermore, when the mediator needs a constant during the algorithm this constant is always a rational number. Indeed, in practice the mediator never asks a query of the form $cut_i(0,\frac{e}{4}\mu_i(X))$, where $e=2,718\ldots$ is Napier's constant.  Queries  have the form $cut_i\big(0,\frac{\mu_i(X)}{n}\big)$ or $cut_i\big(0,\frac{p}{q}\mu_i(X)\big)$, where $p,q,n$ are integers.\\
 
Therefore, if we suppose that the answer to the first three queries are  denoted by $\alpha_1, \alpha_2, \alpha_3$,  then in this new model, the fourth query is of the form $cut_i(\beta_1,\beta_2)$ or $eval_i(\beta_1,\beta_1)$ where  $\beta_1, \beta_2 \in \QQ(\alpha_1,\alpha_2,\alpha_3)$. This means that $\beta_1$ and $\beta_2$ are rational  expressions in terms of $\alpha_1,\alpha_2,\alpha_3$.\\

The algebraic assumption is not restrictive and to author's knowledge all algorithms written in the classical Robertson-Webb model can be written in this Blum-Shub-Smale-Robertson-Webb model (BSSRW model). However, these precisions are important for our study. Indeed, if the algorithm uses $k$ queries with answers $\alpha_1, \ldots, \alpha_k$ for computing a fair division, then the cutpoints used in the output of the algorithm must belong to $\QQ(\alpha_1,\alpha_2,\ldots,\alpha_k)$. This gives an algebraic condition for the final cutpoints. Using this algebraic condition, we can prove our impossibility theorems.

\subsection*{Notations and elementary results}
For given measures $\mu_1,\mu_2,\ldots, \mu_n$ we denote by $f_i$, $i=1,\ldots, n$ the function $$f_i(x)=\mu_i\big([0,x]).$$
Let $\alpha_j$ be the result of the $j$-th query, then we set 
$$\KK_j=\QQ(\alpha_1,\ldots,\alpha_j).$$
We thus have $\KK_j=\KK_{j-1}(\alpha_j)$ and $\KK_{0}=\QQ$.\\

We recall that when a field $\FF$ is a subfield of a field $\KK$ then we say that we have a field extension and this is denoted by $\KK/ \FF$. Furthermore, the dimension of $\KK$ seen as a $\FF$-vector space is called the degree of the extension and is denoted by $[\KK:\FF]$. When the degree is finite we say that the extension is finite. Moreover, when we have the inclusion $\FF \subset \KK \subset \LL$, this gives two extensions $\LL/ \KK$ and $\KK/ \FF$. If the degree of these two extensions are finite then the extension $\LL/ \FF$ is also finite and we have the following equality: $[\LL:\FF]=[\LL:\KK][\KK:\FF]$, see e.g. \cite[Lemma~15.3]{Tignol}.\\
Furthermore, we recall that if $\alpha$ is a root of an irreducible polynomial in $\KK[T]$ with degree $d$ then $[\KK(\alpha):\KK]=d$, see \cite[Proposition 12.15]{Tignol}.\\

Now, we recall some classical results which will be useful in our proofs.
\begin{Lem}\label{lem:0}
Let $p$ be a prime number and let $b$ be an element of some field $\FF$, which is not a $p$-th power in $\FF$.\\ 
The polynomial $T^p-b$ is irreducible over $\FF$.
\end{Lem}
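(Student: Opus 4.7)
The plan is to argue by contradiction via the familiar route: compare a hypothetical nontrivial factor of $T^p-b$ with the complete factorization of $T^p-b$ over a splitting field, and then use the primality of $p$ to extract a $p$-th root of $b$ in $\FF$.

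First I would pass to an algebraic closure $\overline{\FF}$ and fix a root $\alpha$ of $T^p-b$. Note that we may assume $b\neq 0$, since otherwise $b=0^p$ is trivially a $p$-th power. In characteristic different from $p$, the polynomial $T^p-b$ splits as $\prod_{i=0}^{p-1}(T-\zeta^i\alpha)$ where $\zeta$ is a primitive $p$-th root of unity; in characteristic $p$ it splits as $(T-\alpha)^p$. In both cases every monic factor of $T^p-b$ in $\overline{\FF}[T]$ is a product of linear factors of the form $(T-\zeta^{i_j}\alpha)$ (with $\zeta=1$ allowed in the inseparable case).

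Next, suppose for contradiction that $T^p-b=g(T)h(T)$ in $\FF[T]$ with $g$ monic of degree $k$, $1\le k\le p-1$. Writing $g(T)=\prod_{j=1}^{k}(T-\zeta^{i_j}\alpha)$, its constant term is $(-1)^k\zeta^{M}\alpha^{k}$ for some integer $M$, and this constant term lies in $\FF$. Therefore $c:=\zeta^M\alpha^k\in\FF$. Raising to the $p$-th power kills the root of unity and gives the key identity
$$c^{p}=\zeta^{Mp}\,\alpha^{kp}=(\alpha^{p})^{k}=b^{k}.$$
So I have produced an element $c\in\FF$ with $c^p=b^k$.

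Finally, I would invoke the primality of $p$: since $1\le k<p$, we have $\gcd(k,p)=1$, so Bézout gives integers $u,v$ with $uk+vp=1$. Then
$$b=b^{uk+vp}=(b^{k})^{u}\,(b^{v})^{p}=(c^{p})^{u}\,(b^{v})^{p}=(c^{u}b^{v})^{p},$$
which exhibits $b$ as a $p$-th power in $\FF$, contradicting the hypothesis. The only step that requires any care is the clean handling of the constant term of $g$, keeping track of $(-1)^{k}$ and the accumulated root of unity $\zeta^{M}$ so that they disappear upon raising to the $p$-th power; the Bézout maneuver is the standard trick that turns the relation $c^p=b^k$ with $\gcd(k,p)=1$ into a genuine $p$-th root of $b$.
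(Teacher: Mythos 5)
Your proof is correct and complete: the passage from a nontrivial monic factor to the relation $c^p=b^k$ with $c\in\FF$, followed by the B\'ezout step using $\gcd(k,p)=1$, is exactly the standard argument, and you handle both the separable case and characteristic $p$ properly. The paper itself gives no proof here --- it simply cites Tignol's Lemma~13.9 --- and your argument is essentially the one found in that reference, so nothing further is needed.
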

\begin{proof}
For a proof see \cite[Lemma 13.9]{Tignol}.
\end{proof}

\begin{Lem}\label{lem:1}
If $f_1(x)=\cdots=f_{n-1}(x)=x$, and  $f_n(x)=x^{p}$ with $p \geq 3$  a prime number then the degree of the field extension $\KK_j/\KK_{j-1}$ is equal to $p$ or $1$.
\end{Lem}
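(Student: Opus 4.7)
The plan is to go through each of the four possible shapes of query (\emph{eval} or \emph{cut}, applied to one of the first $n-1$ linear players or to the $p$-th player) and check what algebraic relation $\alpha_j$ must satisfy over $\KK_{j-1}$. By the description of the BSSRW model recalled just above the lemma, the inputs $x$ (and possibly $a$) of the $j$-th query are rational expressions in $\alpha_1,\ldots,\alpha_{j-1}$, hence lie in $\KK_{j-1}$.

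First I would dispose of the three easy cases. If the $j$-th query is $eval_i(x,y)$ for any $i$, then $\alpha_j = f_i(y)-f_i(x)$ is a polynomial expression in elements of $\KK_{j-1}$, so $\alpha_j\in\KK_{j-1}$ and the extension degree is $1$. If the query is $cut_i(x,a)$ with $i\leq n-1$, then $\alpha_j$ is the solution $y$ of $y-x=a$, i.e. $\alpha_j=x+a\in\KK_{j-1}$, again giving degree $1$.

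The only nontrivial case is $cut_n(x,a)$: here $\alpha_j=y$ satisfies $y^p-x^p=a$, that is $y^p=b$ with $b:=x^p+a\in\KK_{j-1}$. Thus $\alpha_j$ is a root of $T^p-b\in\KK_{j-1}[T]$. I would now split on whether $b$ is a $p$-th power in $\KK_{j-1}$ or not. If it is not, Lemma~\ref{lem:0} applies directly: $T^p-b$ is irreducible over $\KK_{j-1}$, so it is the minimal polynomial of $\alpha_j$, whence $[\KK_j:\KK_{j-1}]=p$. If on the contrary $b=c^p$ for some $c\in\KK_{j-1}$, then, using that $p\geq 3$ is prime and therefore odd, the equation $y^p=c^p$ has a unique real solution, namely $y=c$; since $\alpha_j$ is a real number (it is obtained from a cut query), we get $\alpha_j=c\in\KK_{j-1}$ and the degree is~$1$.

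The main (very minor) obstacle here is being careful in the ``$b=c^p$'' case: one must argue that the particular root produced by the $cut_n$ query does lie in $\KK_{j-1}$ and not in some proper extension by a nontrivial $p$-th root of unity. This is exactly where the parity of $p$ (forced by $p\geq 3$ prime) is used, to guarantee uniqueness of the real $p$-th root. Combining the four cases yields $[\KK_j:\KK_{j-1}]\in\{1,p\}$, as claimed.
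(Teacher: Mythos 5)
Your proof is correct and follows essentially the same case analysis as the paper's: \emph{eval} queries and \emph{cut} queries for the linear players give degree $1$, while a \emph{cut} for player $n$ leads to $T^p-b$, which is irreducible by Lemma~\ref{lem:0} exactly when $b$ is not a $p$-th power in $\KK_{j-1}$. The extra care you take in the ``$b=c^p$'' case --- using the oddness of $p$ to identify the real root returned by the query with $c\in\KK_{j-1}$ --- is a point the paper's proof states without justification, so this is a welcome refinement rather than a divergence.
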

\begin{proof}
By definition we have $\KK_j=\KK_{j-1}(\alpha_j)$.\\
Two situations appear:\\
First, $\alpha_j=eval_i(x,y)$, where $x,y \in \KK_{j-1}$.\\
 As,  $eval_1(x,y)=\cdots= eval_{n-1}(x,y)=y-x$ and $eval_2(x,y)=y^p-x^p$, we deduce that in this case $\alpha_j \in \KK_{j-1}$. Thus $\KK_j=\KK_{j-1}$ and the degree of the extension is equal to one.\\
Second, $\alpha_j=cut_i(x,a)$, where $x,a \in \KK_{j-1}$. \\
If $i=1, \ldots,n-1$ then $\alpha_j$ is the solution of $\alpha_j - x=a$ then $\alpha_j \in \KK_{j-1}$ and the degree of the field extension is equal to one.\\
If $i=n$ then $\alpha_j$ is the solution of $\alpha_j^p - x^p=a$.\\
 If this equation has a solution in $\KK_{j-1}$, this  means $\alpha_j \in \KK_{j-1}$ and the degree of $\KK_j/\KK{j-1}$ is equal to one.\\
If the equation $\alpha_j^p - x^p=a$ has no solution in $\KK_{j-1}$, then $a+x^p$ is not a $p$-th power in $\KK_{j-1}$. Thus by Lemma~\ref{lem:0} the polynomial $T^p-x^p-a \in \KK_{j-1}[T]$ is irreducible over $\KK_{j-1}$. In this last case, the degree of the extension $\KK_j/\KK_{j-1}$ is equal to $p$, see \cite[Proposition~12.15]{Tignol}.
\end{proof}

\section{Impossible equitable fair divisions}

\subsection{Proof of Theorem~\ref{thm:1}}
The idea used to prove our theorems is the following:\\
If there exists a simple and equitable fair division $X=[0,t]\sqcup [t,1]$, then we have 
$$\mu_1\big([0,t]\big)=\mu_2\big([t,1]\big) \textrm{ or } \mu_2\big([0,t]\big)=\mu_1\big([t,1]\big).$$
This gives
$$f_1(t)=1-f_2(t) \textrm{ or } f_2(t)=1-f_1(t).$$
These two equations are equivalent to the following one:
$$(E) \quad f_1(t)+f_2(t)-1=0.$$
This equation gives an algebraic condition on $t$ which cannot be satisfied when we use the BSSRW model. The following proof explains why. \\

\begin{proof}[Proof of Theorem~\ref{thm:1}]
If an algorithm in the algebraic Robertson-Webb model computes an equitable and simple fair division in $k$ steps then the final cutpoint $t$ belongs to $\KK_k=\QQ(\alpha_1,\ldots,\alpha_k)$. We have thus the inclusion $\QQ \subset \QQ(t) \subset \KK_k$.\\
As $t$ satisfies the equation $(E)$ we have here
$$t^5+t-1=0.$$
We can factorize this expression and we obtain:
$$t^5+t-1=(t^2-t+1)(t^3+t^2-1)=0.$$
As the polynomial $T^2-T+1$ has no real roots we deduce that we have
$$t^3+t^2-1=0.$$
Furthermore, the polynomial $T^3+T^2-1$ is irreducible in $\QQ[T]$. Indeed, it suffices to remark that $T^3+T^2-1$ has no rational roots. This a consequence of the rational roots theorem. In our situation, this theorem says that if $T^3+T^2-1$ has a rational root then  it must be $\pm 1$. Thus $T^3+T-1$ has no rational root and this polynomial is irreducible over $\QQ$.\\

 We deduce then
$$[\QQ(t):\QQ]=3.$$
However, by Lemma~\ref{lem:1} with $n=2$, we have 
$$[\KK_k:\QQ]=5^l,$$ 
with $l \leq k$.
Therefore, the equality
$$5^l=[\KK_k:\QQ]=[\KK_k:\QQ(t)][\QQ(t):\QQ]=[\KK_k:\QQ(t)]\times 3$$
is impossible and this concludes the proof.
\end{proof}

\begin{Rem}
We can remark that with the measures given in Theorem~\ref{thm:1} even if the mediator can compute square roots then we still have an impossibility result.\\
Indeed, if the mediator use a square root after the $j$-th query then this means that during the algorithm the mediator uses a number $\alpha= \sqrt{\beta}$, where $\beta \in \KK_{j}$. Then, we must consider some extensions $\KK'_j=\KK_{j}(\alpha)$ where $\alpha^2 =\beta \in \KK_{j}$. Therefore, the degree of some extensions $\KK_j/\KK_{j-1}$ can be equal to two. Thus, in the previous proof the degree $[\KK_k:\QQ]$ has the following form $[\KK_k:\QQ]=2^m5^l$ and the conclusion is still valid.
\end{Rem}



\subsection{Proof of  Theorem~\ref{thm:2}}
In order to prove Theorem~\ref{thm:2}, we need some tools.

\begin{Lem}\label{lem:2}
If $f_i(x)=x^{e_i}$, for $i=1,2$ then for all $j\geq 1$, $\KK_j=\KK_{j-1}$ or $\KK_j$ is a radical extension of $\KK_{j-1}$.\\
\end{Lem}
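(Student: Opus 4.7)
The plan is to do a direct case analysis on the nature of the $j$-th query, using the explicit formula $f_i(x)=x^{e_i}$ to compute the answer $\alpha_j$ and see how it sits over $\KK_{j-1}$. This mirrors the reasoning already carried out in the proof of Lemma~\ref{lem:1}, but now both measures are power functions instead of just one.

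Recall that by definition of the BSSRW model, the $j$-th query takes arguments in $\KK_{j-1}$, so I may assume $x,y,a \in \KK_{j-1}$. First I would dispatch the easy case of an $\mathrm{eval}$ query: if $\alpha_j = \mathrm{eval}_i(x,y)$, then
\[
\alpha_j \;=\; f_i(y)-f_i(x)\;=\;y^{e_i}-x^{e_i}\;\in\;\KK_{j-1},
\]
so $\KK_j=\KK_{j-1}$ and there is nothing more to prove.

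Next I would handle a $\mathrm{cut}$ query $\alpha_j=\mathrm{cut}_i(x,a)$. By definition $\alpha_j$ is the (unique positive) solution in $[0,1]$ of $f_i(\alpha_j)-f_i(x)=a$, which here reads
\[
\alpha_j^{e_i} \;=\; a+x^{e_i}.
\]
Setting $\beta:=a+x^{e_i}\in\KK_{j-1}$, we thus have $\alpha_j^{e_i}=\beta\in\KK_{j-1}$, so $\KK_j=\KK_{j-1}(\alpha_j)$ is by definition a radical extension of $\KK_{j-1}$ (trivial if $\alpha_j$ already lies in $\KK_{j-1}$, genuinely of degree $e_i$ otherwise, though the statement does not require us to distinguish these subcases).

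I do not expect a real obstacle here: the whole content is that both possible query types yield an answer that is either already in $\KK_{j-1}$ or an $e_i$-th root of an element of $\KK_{j-1}$. The only subtlety worth flagging is that ``radical extension'' is meant in the sense $\KK_j=\KK_{j-1}(\alpha_j)$ with some power $\alpha_j^{n}\in\KK_{j-1}$; this is precisely what the equation $\alpha_j^{e_i}=a+x^{e_i}$ delivers, so the lemma follows immediately.
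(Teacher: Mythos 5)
Your proposal is correct and follows essentially the same case analysis as the paper's own proof: the $\mathrm{eval}$ query yields $y^{e_i}-x^{e_i}\in\KK_{j-1}$, and the $\mathrm{cut}$ query yields $\alpha_j=\sqrt[e_i]{a+x^{e_i}}$, making $\KK_j/\KK_{j-1}$ radical. No gaps to flag.
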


Roughly speaking, this lemma says that the field $\KK_j$ is of the following form: $\KK_{j-1}(\sqrt[n]{\alpha})$, where $n$ is an integer and $\alpha\in\KK_{j-1}$.

\begin{proof}
If the $j$-th query is of the form $eval_i\big([x,y]\big)$ with $x,y \in \KK_{j-1}$ then $eval_i\big([x,y]\big)$ is equal to $y^{e_i}-x^{e_i}$. Thus the result to this query $\alpha_j=y^{e_i}-x^{e_i} \in \KK_{j-1}$. In this situation we have then $\KK_{j}:=\KK_{j-1}$.\\
If the $j$-th query is of the form $cut_i\big(x,a\big)$ with   $x,a \in \KK_{j-1}$ then the result $\alpha_j$ to this query is the unique solution in $[0,1]$ of the following equation:
$$\alpha_j^{e_i}-x^{e_i}=a.$$ 
This implies $\alpha_j=\sqrt[e_i]{a+x^{e_i}}$ and $\KK_j:=\KK_{j-1}(\sqrt[e_i]{a+x^{e_i}})$.\\
The extension $\KK_{j}/ \KK_{j-1}$ is thus a radical extension.\\
\end{proof}

As $\QQ \subset \KK_1 \subset \cdots \subset \KK_k$ we have by definition of a radical extension, see \cite[Chapter 13]{Tignol}, the following corollary:
\begin{Cor}\label{cor}
For all $j \geq 1$, the extension $\KK_j/\QQ$ is radical.
\end{Cor}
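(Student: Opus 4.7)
The plan is a straightforward induction on $j$ that bootstraps Lemma~\ref{lem:2} into a statement about the full tower $\QQ = \KK_0 \subset \KK_1 \subset \cdots \subset \KK_j$. Recall that, by the definition recalled in \cite[Chapter~13]{Tignol}, an extension $\LL/\FF$ is radical precisely when there is a finite tower $\FF = \FF_0 \subset \FF_1 \subset \cdots \subset \FF_m = \LL$ in which each step is generated by an $n_i$-th root of some element of $\FF_{i-1}$. So the corollary is essentially an assembly statement: concatenate the per-step towers produced by Lemma~\ref{lem:2}.

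For the base case, take $j=0$, where $\KK_0 = \QQ$ is radical over itself via the empty tower. For the inductive step, assume $\KK_{j-1}/\QQ$ is radical with a witnessing tower $\QQ = \FF_0 \subset \FF_1 \subset \cdots \subset \FF_m = \KK_{j-1}$. By Lemma~\ref{lem:2}, one of two things happens. Either $\KK_j = \KK_{j-1}$, and the same tower witnesses that $\KK_j/\QQ$ is radical; or $\KK_j = \KK_{j-1}(\sqrt[e_i]{a + x^{e_i}})$ for some $a, x \in \KK_{j-1}$ and $i \in \{1,2\}$, in which case defining $\FF_{m+1} := \KK_j$ extends the tower by one more simple radical step, so $\KK_j/\QQ$ is still radical.

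I do not expect any real obstacle here; the statement is essentially a reformulation of Lemma~\ref{lem:2} once the definition of a radical extension as a tower of simple radical extensions is in hand. The only minor points to make sure of are that the base case is handled (either by the empty-tower convention or by applying Lemma~\ref{lem:2} once to get $\KK_1/\QQ$) and that both alternatives in Lemma~\ref{lem:2} preserve the tower structure, which they clearly do.
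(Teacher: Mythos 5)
Your induction is correct and is exactly the argument the paper intends: the paper simply observes that the chain $\QQ \subset \KK_1 \subset \cdots \subset \KK_k$ together with Lemma~\ref{lem:2} witnesses radicality by definition, which is your concatenation-of-towers argument stated more tersely. No difference in substance.
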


Now, we recall a result about the irreducibility and the Galois group of certain trinomials.

\begin{Prop}[Selmer \cite{Selmer}]\label{prop:selmer}
The polynomials $T^d-T-1$ are irreducible in $\QQ[T]$ for all $d$.\\
The polynomials $T^d+T+1$ are irreducible in $\QQ[T]$ for $d\not \equiv\, 2\, [3]$, but have a factor $T^2+T+1$ when $d \equiv 2 \, [3]$. In the latter case, $T^d+T+1$ has another factor which is irreducible.
\end{Prop}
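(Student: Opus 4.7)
Since the proposition is a classical result of Selmer (1956), my plan is to sketch the structure of his argument and its adaptation rather than carry out the intricate analytic estimates in full detail.

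First, for the easy observation in the congruence case: when $d \equiv 2 \pmod 3$, a primitive cube root of unity $\omega$ satisfies $\omega^d = \omega^{d \bmod 3} = \omega^2$, hence $\omega^d + \omega + 1 = \omega^2 + \omega + 1 = 0$. Thus $T^2 + T + 1$ divides $T^d + T + 1$ in $\QQ[T]$, accounting for the stated factor. What remains in this case is irreducibility of the cofactor $Q(T) := (T^d + T + 1)/(T^2 + T + 1) \in \ZZ[T]$, which is established by the same method as the main cases.

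For the irreducibility of $f(T) = T^d - T - 1$, I would argue by contradiction: suppose $f = g h$ with $g, h \in \ZZ[T]$ monic of positive degrees $k$ and $d-k$. Let $\alpha_1, \ldots, \alpha_d \in \CC$ be the roots of $f$ and $S$ the index set of those that are roots of $g$. Since $f(0) = -1$ and $g(0), h(0) \in \ZZ$, we have $|g(0)| = |h(0)| = 1$, so $\prod_{i \in S} |\alpha_i| = 1$. Using the defining relation $\alpha_i^d = \alpha_i + 1$ and taking the product over $S$, one also gets $\prod_{i \in S} |\alpha_i + 1| = \prod_{i \in S}|\alpha_i|^d = 1$. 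The core of Selmer's proof is to combine these two equalities with the vanishing of many Newton power sums of the roots (forced by the absence of the coefficients of $T^2, \ldots, T^{d-1}$ in $f$) and a Jensen/Mahler-measure comparison, to derive a strict inequality ruling out any proper nonempty subset $S$.

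The main obstacle is precisely this final step: the algebraic identities alone admit many formal solutions, so one has to invest in fine analytic control of how the roots of $T^d - T - 1$ distribute with respect to the unit circle. Once this is in place, the case $T^d + T + 1$ with $d \not\equiv 2 \pmod 3$ follows by running the same template with $\alpha_i^d = -\alpha_i - 1$ (noting that no primitive cube root of unity is a root, so no obvious factor obstructs irreducibility), and the case $d \equiv 2 \pmod 3$ is reduced to the cofactor $Q(T)$ above, on which an entirely analogous root-magnitude analysis is carried out. I expect this analytic step to be the only serious difficulty; the algebraic setup and the observation on cube roots of unity are routine.
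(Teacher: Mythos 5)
The paper does not actually prove this proposition: it is quoted as a classical theorem with a citation to Selmer's 1956 article, so there is no internal argument to compare against. Judged as a proof in its own right, your proposal establishes only the elementary part of the statement. The observation that a primitive cube root of unity $\omega$ satisfies $\omega^d+\omega+1=\omega^2+\omega+1=0$ when $d\equiv 2\ [3]$, hence that $T^2+T+1$ divides $T^d+T+1$, is correct and complete. The two product identities you derive for a hypothetical factorization $f=gh$ (namely $\prod_{i\in S}|\alpha_i|=1$ and $\prod_{i\in S}|\alpha_i+1|=1$, both following from $|g(0)|=|h(0)|=1$ and $\alpha_i^d=\alpha_i+1$) are also correct.

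The genuine gap is that everything of substance — the irreducibility of $T^d-T-1$, of $T^d+T+1$ for $d\not\equiv 2\ [3]$, and of the cofactor $(T^d+T+1)/(T^2+T+1)$ — is deferred to an unexecuted ``Jensen/Mahler-measure comparison,'' and the identities you do prove are, as you yourself note, satisfied by many formal configurations, so they cannot close the argument. Moreover, the mechanism you gesture at is not quite the one that works: Selmer's proof does not proceed by comparing Mahler measures against vanishing Newton sums of \emph{all} the roots, but by considering the quantity $\sigma=\sum_{i\in S}\bigl(\alpha_i-\alpha_i^{-1}\bigr)$ attached to the putative factor $g$. Because $g(0)=\pm 1$, this $\sigma$ is a rational integer, and a root-by-root estimate (writing $\alpha=re^{i\theta}$ and using $r^{2d}=r^2+2r\cos\theta+1$ to express $\mathrm{Re}(\alpha-\alpha^{-1})$ as an explicit function of $r$) shows that $\sigma$ is real, strictly positive, and too small to be a positive integer for a proper factor, which is the contradiction. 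Without this (or an equivalent) quantitative step being carried out, your text is a description of where a proof would live rather than a proof; since the paper itself only cites Selmer for this result, the honest conclusion is that neither your sketch nor the paper contains the argument, and the missing analytic estimate is precisely the content of the theorem.
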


\begin{Prop}[Osada \cite{Osada}]\label{prop:osada}
Let $f(T)=T^d+aT+b \in \ZZ[T]$, where $a=a_0c^d$ and $b=b_0c^d$ for some integer $c$. Then the Galois group over $\QQ$ of this polynomial is isomorphic to the symmetric group $\mathcal{S}_d$ if the following conditions are satisfied:
\begin{enumerate}
\item $f(T)$ is irreducible over $\QQ$,
\item $\gcd\big(a_0c(d-1),db_0\big)=1$.
\end{enumerate}
\end{Prop}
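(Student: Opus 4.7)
The statement is attributed to Osada and the proof is given in the referenced paper \cite{Osada}, so in practice the most honest proof proposal is simply to cite that source. Nonetheless, let me outline the strategy one would follow to reconstruct the argument, since it is the same three-stage recipe used for almost all Galois group computations of irreducible trinomials over $\QQ$.

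First, hypothesis (1) directly implies that the Galois group $G$ acts transitively on the $d$ roots of $f$, so $G$ embeds as a transitive subgroup of $\mathcal{S}_d$. Second, the discriminant of the trinomial $T^d+aT+b$ admits the well-known closed form
$$\Delta = (-1)^{d(d-1)/2}\bigl(d^d b^{d-1} + (-1)^{d-1}(d-1)^{d-1} a^d\bigr).$$
Substituting $a=a_0 c^d$ and $b=b_0 c^d$ and using the coprimality hypothesis (2), one extracts a prime $p$ whose exact multiplicity in $\Delta$ is one. Applying Dedekind's theorem to the factorization of $f$ modulo such a $p$ -- the factorization then exhibits a single repeated factor -- produces a transposition $\tau \in G$.

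The third and most delicate stage is to pass from "transitive subgroup of $\mathcal{S}_d$ containing a transposition" to the equality $G = \mathcal{S}_d$. This requires first showing that $G$ acts primitively on the roots, i.e. ruling out any non-trivial block system; once primitivity is established, the classical theorem of Jordan -- a primitive subgroup of $\mathcal{S}_d$ containing a transposition equals $\mathcal{S}_d$ -- concludes the proof. The primitivity verification is precisely where the exact shape of the condition $\gcd(a_0 c(d-1), d b_0) = 1$ is invoked, as it constrains the ramification data in a way that is incompatible with any proper block decomposition. This primitivity step is the main technical obstacle, and for the detailed case analysis I would refer the reader to the computations in \cite{Osada}.
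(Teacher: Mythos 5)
The paper offers no proof of this proposition at all: it is imported verbatim from Osada's article, so your decision to simply cite \cite{Osada} is exactly what the paper does, and for the purposes of this article that is sufficient. Your supplementary sketch is broadly in the right family of arguments but does not match Osada's actual route, and one step of it is shaky as stated. First, the claim that the coprimality hypothesis lets one ``extract a prime whose exact multiplicity in $\Delta$ is one'' is not justified and is not in general true; Dedekind's criterion applied to such a prime is not how the transpositions are produced. Osada instead works locally at each ramified prime and shows that condition (2) forces the ramification to be tame and of the shape $e=2$ at a single prime above $p$, so that every inertia subgroup of the Galois group $G$ is generated by transpositions. Second, the final stage is not ``primitivity plus Jordan's theorem'': Osada invokes Minkowski's theorem that $\QQ$ admits no unramified extensions to conclude that $G$ is generated by its inertia subgroups, hence by transpositions, and then uses the elementary fact that a \emph{transitive} subgroup of $\mathcal{S}_d$ generated by transpositions is all of $\mathcal{S}_d$ (connectedness of the graph whose edges are the available transpositions). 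This avoids having to establish primitivity, which is the step you rightly flag as the obstacle in your outline but for which you offer no argument. Since the proposition is used here only as a black box, none of this affects the paper, but if you intend your sketch to stand as a proof outline you should replace the ``primitivity'' stage by the generated-by-transpositions argument.
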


These propositions allow us to show the following lemma.

\begin{Lem}\label{lem:groupegalois}
If $d$ is even or if $d\geq 5$ is odd with $d\not\equiv 2[3]$, then the Galois group over $\QQ$ of $T^d+T-1$ is isomorphic to the symmetric group $\mathcal{S}_d$.\\
If $d\geq 5$ is prime and $d\equiv 2[3]$ then $T^d+T-1$ is reducible over $\QQ$: it has an irreducible factor with degree $2$ and another one with degree $d-2$.
\end{Lem}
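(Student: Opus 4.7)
The approach is to convert $T^d+T-1$ into one of Selmer's trinomials by the substitution $T\mapsto -T$, invoke Proposition~\ref{prop:selmer} for the irreducibility (or factorisation) assertions, and then apply Proposition~\ref{prop:osada} to identify the Galois group in the irreducible cases.

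First I would treat the two parities separately. Set $P(T)=T^d+T-1$. If $d$ is even, then $P(-T)=T^d-T-1$; Selmer's theorem says $T^d-T-1$ is irreducible over $\QQ$, so $P(T)$ is irreducible as well (the substitution $T\mapsto -T$ is an automorphism of $\QQ[T]$). If $d$ is odd, then $P(-T)=-T^d-T-1=-(T^d+T+1)$, so $P(T)$ is irreducible over $\QQ$ iff $T^d+T+1$ is. By Selmer, this happens precisely when $d\not\equiv 2\,[3]$. Combining these two cases gives irreducibility of $T^d+T-1$ in all cases listed in the first half of the lemma.

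Next, for the irreducible cases I would apply Proposition~\ref{prop:osada} with $a=1$, $b=-1$, $c=1$, $a_0=1$, $b_0=-1$. Condition (1) has just been established, and condition (2) reads
$$\gcd\bigl(a_0c(d-1),\,db_0\bigr)=\gcd(d-1,\,-d)=\gcd(d-1,d)=1,$$
which always holds. Hence the Galois group over $\QQ$ of $T^d+T-1$ is isomorphic to $\mathcal{S}_d$.

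Finally, suppose $d\geq 5$ is prime with $d\equiv 2\,[3]$; in particular $d$ is odd. By Selmer, $T^d+T+1=(T^2+T+1)\,h(T)$ with $h\in\QQ[T]$ irreducible of degree $d-2$. Substituting $T\mapsto -T$ and using $P(-T)=-(T^d+T+1)$, I would get
$$P(T)=-\bigl((-T)^2+(-T)+1\bigr)\,h(-T)=-(T^2-T+1)\,h(-T).$$
Since $d-2$ is odd and $h$ is monic, $h(-T)=-g(T)$ for a monic $g$ of degree $d-2$, giving the factorisation $T^d+T-1=(T^2-T+1)\,g(T)$; irreducibility of $g$ follows from that of $h$ because the substitution is invertible. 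The main thing to watch carefully is the sign bookkeeping in this last step, to ensure that the resulting quadratic factor is $T^2-T+1$ (and not $T^2+T+1$, which would have real roots contradicting the application in Theorem~\ref{thm:1}) and that the cofactor is genuinely of degree $d-2$ and irreducible. No deep obstacle is expected; the work is essentially assembling Selmer and Osada with a clean change of variable.
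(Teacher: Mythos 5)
Your proposal is correct and follows essentially the same route as the paper: the substitution $T\mapsto -T$ to reduce to Selmer's trinomials $T^d-T-1$ and $T^d+T+1$, Selmer's theorem for irreducibility or the factorisation, and Osada's criterion (whose gcd condition you verify explicitly, as the paper leaves implicit) to identify the Galois group as $\mathcal{S}_d$. The only difference is that you carry out the sign bookkeeping in the reducible case in more detail than the paper, correctly obtaining the factor $T^2-T+1$ consistent with the computation in Theorem~\ref{thm:1}.
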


\begin{proof}
When $d$ is even, the change of variables $Y=-T$ gives 
$$T^d+T-1=(-T)^d-(-T)-1=Y^d-Y-1.$$ 
We deduce that $T^d+T-1$ is irreducible since, by Proposition \ref{prop:selmer}, $Y^d-Y-1$ is irreducible.\\
When $d\geq 5$ is odd with $d\not \equiv\, 2\, [3]$, the change of variables $Y=-T$ gives
$$T^d+T-1=-(-T)^d-(-T)-1=-Y^d-Y-1=-(Y^d+Y+1).$$
As before, thanks to Proposition~\ref{prop:selmer}, we deduce that $T^d+T-1$ is irreducible since $Y^d+Y+1$ is irreducible.\\

Therefore, in the two previous cases  $T^d+T-1$ is irreducible over $\QQ$. \\ Proposition~\ref{prop:osada} with $a_0=c=1$ and $b=-1$ entails in these situations  that the Galois group of  $T^d+T-1$ is isomorphic to $\mathcal{S}_d$.\\

When $d\geq 5$ is prime and $d\equiv 2[3]$, the change of variables $Y=-T$ gives as before $T^d+T-1=-(Y^d+Y+1)$
 and Proposition \ref{prop:selmer} gives the desired result.
\end{proof}

Now, we can prove Theorem~\ref{thm:2}.
\begin{proof}[Proof of Theorem~\ref{thm:2}]
 We  suppose that there exists an algorithm in the BSSRW model computing an equitable and simple fair division $X=[0,t] \cup [t,1]$. Then, $t$ must satisfy the equation $(E)$. Here, this equation is:
$$t^d+t-1=0.$$

 First,  we suppose that $d$ satisfies the hypothesis of one of the first two items.\\
 
As $t$ must belong to $\KK_k$ and,  by Corollary~\ref{cor}, $\KK_k$ is a radical extension of $\QQ$, we deduce that $t$ has a radical expression over $\QQ$. Thus as the polynomial $T^d+T-1$ is irreducible then it can be solved by radicals over $\QQ$, see  \cite[Proposition 14.33]{Tignol}.\\
However,  by Lemma~\ref{lem:groupegalois}, the Galois group of $T^d+T-1$ is isomorphic to $\mathcal{S}_d$ . 
Then Galois' theory implies that this polynomial cannot be solved by radicals over $\QQ$, see \cite[Chapter 14]{Tignol}. This gives the desired contradiction.\\

Now, we suppose that $d \geq 5$ is prime and $d \equiv 2 \, [3]$.\\
 In this case, the proof is a generalization of the proof of Theorem \ref{thm:1}.\\

By Lemma \ref{lem:groupegalois}, the polynomial $T^d+T-1$ has an irreducible factor with degree $2$ and another one with degree $d-2$. This gives
$$[\QQ(t):\QQ]=2 \textrm{ or } [\QQ(t):\QQ]=d-2.$$
Furthermore, thanks to Lemma \ref{lem:1} we have
$$[\KK_k:\QQ]=d^l,$$
where $l \in \NN$.\\
The equality $$[\KK_k:\QQ]=[\KK_k:\QQ(t)][\QQ(t):\QQ]$$
is then impossible since $d$ is prime. This concludes the proof.
\end{proof}

\section{Impossibility to maximize the social welfare function}
\setcounter{Thm}{2}
\begin{Thm}
In the BSSRW model of computation there exists measures $\mu_1, \mu_2, \ldots, \mu_n$ such that no algorithm returns a division which maximizes the utilitarian social welfare function.\\
Furthermore, we can take $\mu_1, \mu_2, \ldots, \mu_n$ in the following way:
$$\mu_1([0,x])=\cdots=\mu_{n-1}([0,x])=x,\quad  \mu_n([0,x])=x^{p}$$
where $p\geq 3$ is a prime number.
\end{Thm}
It must be noticed that the division is not supposed to be simple.\\

The previous theorem deals with a problem involving an inequality about the utilitarian social welfare function. Our strategy will be to reduce this problem to a problem involving an equation.\\
In general, fair division problems are stated with inequalities, see e.g. envy-free division and  proportional division. We can always reduce these problems to problems involving equalities. For example, the condition $\mu_i(X_i) \geq 1/n$ becomes $\mu_i(X_i)=1/n+e^2$, where $e \in \RR$. However, with this method we introduce new variables and the problem do not become necessarily easier with these equalities.

\begin{proof}
Let  $X=\sqcup_{i=1}^n X_i$ be a division of $X$ constructed with $m$ cuts.\\
 This means that each $X_i$ can be written in the following way $X_i=\sqcup_{j \in I_i}[x_{j},x_{j+1}]$, and we have $m$ different $x_j$: $x_1 \leq x_2 \leq \cdots \leq x_m$.\\

The value of the utilitarian social welfare function associated to this division is 
$$\mathcal{F}(x_1,\ldots,x_m)=\sum_{i=1}^{n-1} \sum_{j \in I_i}(x_{j+1}-x_j)+\sum_{j \in I_n} (x_{j+1}^{p}-x_{j}^{p}).$$

Now, we consider an index $j_0\in I_n$.\\
We remark that 
$$\mathcal{F}(x_1,\ldots,x_m)=g(x_1,\ldots,x_{j_0-1},x_{j_0+1},\ldots,x_m)+x_{j_0}-x_{j_0}^p,$$
where $g$ is a function independent of $x_{j_0}$.\\
If this division maximizes the social welfare function then we must have 
$$\partial_{x_{j_0}}\mathcal{F}(x_1,\ldots,x_m)=1-px_{j_0}^{p-1}=0.$$
Then $x_{j_0}=\sqrt[p-1]{1/p}$. As $p\geq 3$ is a prime number we deduce that $x_{j_0} \not \in \QQ$. Furthermore, as $x_{j_0}$  is a root of the polynomial $pT^{p-1}-1$ we get
$$1<[\QQ(x_{j_0}):\QQ]< p.$$
Now, suppose that an algorithm in the BSSRW model computes in $k$ steps a division which maximizes the utilitarian social welfare function. Then, $x_{j_0} \in \KK_k$.
However, by Lemma \ref{lem:1}, we have
$$[\KK_k:\QQ]=p^l,$$
with $l \leq k$. Therefore, the equality
$$[\KK_k:\QQ]=[\KK_k:\QQ(x_{j_0})][\QQ(x_{j_0}):\QQ]$$
is impossible since $p$ is prime. This concludes the proof.
\end{proof}


\textbf{Acknowledgement}
The author thanks Erel Segal-Halevi for his precious remarks about a preliminary version of this article.



 

\newcommand{\etalchar}[1]{$^{#1}$}

\end{document}